\title{
Scale-Equivariant Imaging: Self-Supervised Learning for Image Super-Resolution and Deblurring
}
\author{
J\'er\'emy Scanvic \orcidlink{https://orcid.org/0009-0003-6117-0492},
Mike Davies \orcidlink{https://orcid.org/0000-0003-2327-236X},
Patrice Abry \orcidlink{https://orcid.org/0000-0002-7096-8290},
Juli\'an Tachella \orcidlink{https://orcid.org/0000-0003-3878-9142}%
\thanks{J\'er\'emy Scanvic, Patrice Abry and Juli\'an Tachella are with ENSL, CNRS, Laboratoire de Physique, F-69364 Lyon, France.}%
\thanks{Mike Davies is with the School of Engineering, University of Edinburgh, Edinburgh, EH9 3FB, UK.}%
\thanks{This project was provided with computing HPC and storage resources by GENCI at IDRIS thanks to the grant 2025-AD011016422 on the supercomputer Jean Zay's H100 partition.},
}
\newtheorem{definition}{Definition}
\newcommand{\eg}{e.g.,}
\newcommand{\csignalset}{\mathcal{Z}}
\newcommand{\dsignalset}{\mathcal{X}}
\newcommand{\varkernel}{{h}}
\newcommand{\varpos}{{u}}
\newcommand{\varposalt}{{v}}
\begin{document}

\maketitle

\begin{abstract}
Self-supervised methods have recently proved to be nearly as effective as
supervised ones in various imaging inverse problems, paving the way for
learning-based approaches in scientific and medical imaging applications where
ground truth data is hard or expensive to obtain. These methods critically rely
on invariance to translations and/or rotations of the image distribution to
learn from incomplete measurement data alone. However, existing approaches fail
to obtain competitive performances in the problems of image super-resolution
and deblurring, which play a key role in most imaging systems. In this work, we
show that invariance to roto-translations is insufficient to learn from
measurements that only contain low-frequency information. Instead, we propose
scale-equivariant imaging, a new self-supervised approach that leverages the
fact that many image distributions are approximately scale-invariant, enabling
the recovery of high-frequency information lost in the measurement process.
We demonstrate throughout a series of experiments on real datasets that the
proposed method outperforms other self-supervised approaches, and obtains
performances on par with fully supervised learning.
\end{abstract}

\begin{IEEEkeywords}
Self-supervised learning, imaging inverse problems
\end{IEEEkeywords}


\bstctlcite{IEEEexample:BSTcontrol}

\section{Introduction}

Inverse problems are ubiquitous in scientific imaging and medical imaging. Linear inverse problems are generally modelled by the forward process
\begin{equation}
	y = A(z) + \varepsilon,
\end{equation}
where $y$ represents the observed measurements, $z$ represents the latent image,  $A$ is the imaging operator, and $\varepsilon$ depicts the noise corrupting the measurements. In this paper, we focus on additive white Gaussian noise but our results can easily be generalized for other noise distributions, e.g., Poisson, or mixed Gaussian-Poisson. Recovering $z$ from the noisy measurements is generally ill-posed, as the forward operator is often rank-deficient, e.g., when the number of measurements is smaller than the number of image pixels. Moreover, even when the forward operator is not rank-deficient, it might still be ill-conditioned, and reconstructions might be very sensitive to noise (for example, in image deblurring problems).

There is a wide range of approaches for solving the inverse problem, i.e., recovering $z$ from $y$. Classical methods rely on hand-crafted priors, e.g., a total variation, together with optimization algorithms to obtain a reconstruction that agrees with the observed measurements and the chosen prior~\cite{chambolle2016introduction},~\cite{dabov2007image}. These model-based approaches often obtain suboptimal results due to a mismatch between the true image distribution and the hand-crafted priors. An alternative approach leverages a dataset of reference images and associated measurements in order to train a reconstruction model~\cite{ongie2020deep}. While this approach generally obtains state-of-the-art performance, it is often hard to apply it to problems where ground truth images are expensive or even impossible to obtain.

In recent years, self-supervised methods have highlighted the possibility of learning from measurement data alone, thus bypassing the need for ground truth references. These methods focus either on taking care of noisy data and/or a rank-deficient operator. Noise can be handled through the use of Stein's unbiased risk estimator if the noise distribution is known~\cite{metzler_unsupervised_2020}, or via Noise2x techniques~\cite{krull_noise2void_2019,lehtinen_noise2noise_2018,batson_noise2self_2019} which require milder assumptions on the noise distribution such as independence across pixels~\cite{batson_noise2self_2019}. Rank-deficient operators can be handled using measurement data associated with multiple measurement operators $A_1,\dots,A_G$ (each with a different nullspace)~\cite{bora2018ambientgan},~\cite{tachella_sensing_2023}, or by assuming that the image distribution is invariant to certain groups of transformations such as translations or rotations~\cite{chen_equivariant_2021,chen_robust_2022}. This assumption is generally valid as a shifted or rotated image is generally still a valid image.

Self-supervised image deblurring methods have been proposed in settings where the images are blurred with multiple kernels~\cite{quan_learning_2022}, however this assumption is not verified in many applications. Moreover, self-supervised methods that rely on a single operator (e.g., a single blur kernel)~\cite{chen_equivariant_2021,chen_robust_2022}, have not been applied to inverse problems with a single forward operator that removes high-frequency information such as image super-resolution and image deblurring, except for methods exploiting internal/external patch recurrence~\cite{shocher_zero-shot_2018} which assume noiseless measurements, and thus do not easily generalize to other inverse problems.

Image deblurring and super-resolution are two inverse problems that are particularly difficult to solve using self-supervised methods. The former is severely ill-conditioned, and the latter is rank-deficient.
This makes methods that minimize the measurement consistency error inapplicable as they cannot retrieve any information that lies within the nullspace~\cite{chen_equivariant_2021} or where the forward operator is ill-conditioned, whenever there is even the slightest amount of noise present.

\IEEEpubidadjcol

In this work, we propose a new self-supervised loss for training reconstruction networks for super-resolution and image deblurring problems which relies on the assumption that the underlying image distribution is scale-invariant, and can be easily applied to any linear inverse problem, as long as the forward process is known. \Cref{fig:Figure1} shows an overview of that loss. The main contributions are the following:

\begin{figure*}[!t]
	\includegraphics[width=\textwidth]{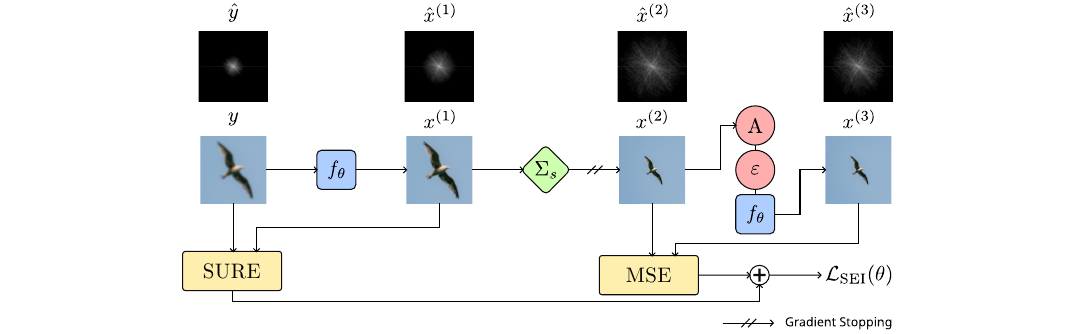}
	\caption{
	  \textbf{Overview of the method.}
	  Our scale-equivariant imaging loss $\mathcal{L}_{\mathrm{SEI}}$ combines a measurement consistency loss that is robust to noise (SURE) with a new self-supervised loss that substitutes downscaled reconstructions for ground truth images. The use of downscaling widens the frequency spectrum past the cut-off imposed by the blurring and super-resolution operators, thereby enabling the recovery of high-frequency information lost in the measurement process.
	}
	\label{fig:Figure1}
\end{figure*}

\begin{enumerate}
	\item We introduce a new theoretical framework proving that scaling transforms play a key role in learning from bandlimited images which includes the important problems of deblurring and super-resolution.

	\item We propose a new self-supervised loss for image deblurring and super-resolution leveraging the scale-invariance of typical image distributions to recover high-frequency information directly from corrupted images.

	\item We conduct a series of experiments on images from multiple image distributions corrupted using varying blur filters and downsampling operators, showing that scale-equivariant imaging (SEI) performs on par with fully-supervised models and outperforms other state-of-the-art self-supervised methods.
\end{enumerate}

\section{Related work}

\paragraph{Supervised approaches} Most state-of-the-art super-resolution and image deblurring methods rely on a dataset containing ground truth references \cite{xu_deep_2014, ren_deep_2018, dong_deep_2020, liang_swinir_2021}.  However, in many scientific imaging problems such as fluorescence microscopy and astronomical imaging, having access to high-resolution images can be difficult, or even impossible~\cite{belthangady2019applications}.

\paragraph{Prior-based approaches} A family of super-resolution and image deblurring methods rely on plug-and-play or diffusion priors~\cite{kamilov2023plug,zhao22Discrete,zhao23DDFM} and optimization methods. Other approaches rely on the inductive bias of certain convolutional architectures, i.e., the deep image prior \cite{ulyanov_deep_2018,ren2020neural}. Some methods combine both techniques \cite{mataev_deepred_2019}. However, these approaches can provide sub-optimal results if the prior is insufficiently adapted to the problem.

\paragraph{Stein's Unbiased Risk Estimate}
Stein's unbiased risk estimate is a popular strategy for learning from noisy measurement data alone, which requires knowledge of the noise distribution and of the forward model~\cite{stein_estimation_1981,eldar_generalized_2009,metzler_unsupervised_2020}.
However, for image super-resolution and image deblurring, the forward operator is typically either rank-deficient or at least severely ill-conditioned, and the SURE estimates can only capture discrepancies in the range of the forward operator.

\paragraph{Noise2x}
The Noise2Noise method~\cite{lehtinen_noise2noise_2018} shows that it is possible to learn a denoiser in a self-supervised fashion if two independent noisy realizations of the same image are available for training, without explicit knowledge of the noise distribution. Related methods extend this idea to denoising problems where a single noise realization is available, e.g., Noise2Void~\cite{krull_noise2void_2019} and Noise2Self~\cite{batson_noise2self_2019}, and to general linear inverse problems, e.g., Noise2Inverse~\cite{hendriksen_noise2inverse_2020}. However, all of these methods crucially require that the forward operator is invertible and well-conditioned, which is not the case for super-resolution and most deblurring problems.

\paragraph{Equivariant imaging}

Typical image distributions $\mathcal Z$ are invariant to rotations and/or translations, i.e., $\mathop{T_g} z \in \mathcal Z,\ \forall z \in \mathcal Z$, for every roto-translation $\{ T_g \}_{g = 1}^{|G|}$. Equivariant imaging (EI) leverages this property to learn reconstruction functions from measurement data alone, in problems where the forward operator is incomplete such as sparse-angle tomography, image inpainting, image fusion and magnetic resonance imaging~\cite{chen_equivariant_2021,chen_robust_2022,fatania_nonlinear_2023,zhao24Equivariant}.
However, this approach requires the forward operator not to be equivariant to rotations or translations, i.e., $A(\mathop{T_g} z) \neq \mathop{T_g} A(z)$ for at least some $z$ and $T_g$.
The super-resolution and deblurring problems generally do not verify this condition, and existing EI methods cannot be used. The identification of invariant image distributions has been studied in a subsequent work~\cite{tachella_sensing_2023}, but only for discrete images under the action of a compact group, with guarantees for almost every generic operator. We introduce a different analysis imposed by the non-compact scaling group which operates on continuous images. This new framework leverages Fourier analysis instead of representation theory and yields guarantees of a stronger kind, covering every bandlimiting operator as opposed to almost every.

\paragraph{Patch recurrence methods} Internal patch recurrence of natural images can be exploited to learn to super-resolve images without high-resolution references~\cite{shocher_zero-shot_2018}. However, while this self-supervised learning method can be extended to collections of low-resolution images via external patch recurrence,
it cannot be easily generalized to other inverse problems where the degraded image has different statistics from the ground truth image (e.g., image deblurring and noisy measurement data).
The method proposed in this work generalizes the idea of patch recurrence to scale-invariance and can be applied to any imaging inverse problem where high-frequency information is missing.

\section{Learning from bandlimited images}
\label{section.theory}

Imaging inverse problems are solved by finding a reconstruction function $f : y \mapsto z$ that maps the observed measurements $y \in \mathcal Y$ to the corresponding latent images $z \in \mathcal Z$. Self-supervised algorithms attempt to learn this function from a dataset of measurements $\{y_1, \ldots, y_N\} \subset \mathcal Y$ which raises the question of whether it is theoretically possible in the first place.  The analysis is typically broken down in two subproblems~\cite{tachella_sensing_2023}:
\begin{itemize}
	\item \textbf{Model identification.} Is the set of latent images $\mathcal Z$ fully determined by the set of measurements $\mathcal Y$?
	\item \textbf{Image recovery.} If the set of images $\mathcal Z$ is fully identified, is there a reconstruction function $f : y \mapsto z$ mapping measurements back to the corresponding latent image?
\end{itemize}
In this section, we focus on the problem of model identification for bandlimiting operators (\eg{} blur filters and downsampling operators) and show that scale-invariant signal sets $\mathcal Z$ are identifiable up to an arbitrary (fixed) frequency, unlike roto-translation invariant ones.
Unique image recovery given a known image set $\mathcal Z$ is possible, under appropriate conditions, if the set is low-dimensional. This has been studied in the compressive sensing literature (albeit generally under the assumption of random forward operators) \cite{bourrier14Fundamental} but is also tacitly assumed in any image reconstruction process. We will similarly make this assumption here.

We consider images as functions $z : \mathbb R^2 \to \mathbb C^d$ where $d \geq 1$ denotes the number of channels. Formally, we model the set of images as a subset $\mathcal Z$ of the Schwartz space\footnote{The Schwartz space $\mathcal S$ is dense in $L^2$, and unlike $L^2$ it is closed under convolution, making it a very general model for studying bandlimited inverse problems.} $\mathcal S$ of functions mapping $\mathbb R^2$ to $\mathbb C^d$. The problems of super-resolution and deblurring fall under the umbrella of image recovery from bandlimited measurements. Thus in this section, to simplify notation we assume that the forward operator is modelled as $y = \varkernel * z$, where $\varkernel \in \mathcal S$ is a filter acting on images $z$ by convolution.
 Moreover, we assume that $\varkernel$ has a finite bandwidth $0 \leq \xi_\varkernel < \infty$, defined as:
\begin{definition}[Bandlimited, bandwidth]
	Let $\varphi \in \mathcal S$, we say that $\varphi$ is bandlimited if for some $\xi_\varphi \geq 0$,
	\begin{equation}
		\hat \varphi(\xi) = 0, \forall \xi \in \mathbb R^2, \lVert \xi \rVert_2 \geq \xi_\varphi,
	\end{equation}
	where
	$
		\hat \varphi(\xi) = \int_{\mathbb R^2} \varphi(\varpos) e^{-i2\pi \xi^\top \varpos} \, \mathrm d\varpos
		$
	is the Fourier transform of $\varphi$. In that case, we call the smallest such $\xi_\varphi$ the bandwidth of $\varphi$.
\end{definition}

While physical images $z$ are typically not bandlimited, the measurements $y=\varkernel * z$ have bandwidth at most $\xi_{\varkernel}$, and due the Shannon-Nyquist's sampling theorem, they can be represented by discrete samples on a regular grid of sampling rate $2 \xi_{\varkernel}$~\cite{vetterli2014foundations}.

In the self-supervised learning setting, we only observe the set of bandlimited measurements
\begin{equation} \label{eq:def:Y}
	\mathcal Y = \{ \varkernel * z,\ z \in \csignalset \}.
\end{equation}
and the key question is whether we can recover the set of images $\csignalset$ from the set of measurements $\mathcal Y$. This formulation has been studied when $\csignalset$ is a subset of the finite-dimensional space $\mathbb R^n$ under the assumption that ground truth images have finite bandwidth~\cite{tachella_sensing_2023}. Here, we consider the more general setting where $\csignalset$ contains continuous images that can have an infinite bandwidth. According to Shannon-Nyquist's sampling theorem~\cite{vetterli2014foundations}, it is impossible to represent them faithfully on a (finite) grid of pixels, no matter how fine it is. This makes the formulation of the problem impractical to work with and instead we propose to recover the set
\begin{equation} \label{eq:def:X}
	\mathcal X = \{ \varphi * z,\ z \in \mathcal Z \},
\end{equation}
which contains low-pass filtered versions $x=\varphi * z$ of the images in $\csignalset$, using a bandlimiting filter $\varphi \in \mathcal S$ with the desired increased bandwidth $\xi_\varphi > \xi_\varkernel$. In this way, the images in $\dsignalset$ can be represented in a discrete grid with sampling rate $2 \xi_\varphi$. In the context of image deblurring, $\xi_\varkernel$ represents the resolution of the blurry images and $\xi_\varphi$ the target resolution, and for image super-resolution the ratio $\xi_\varphi / \xi_\varkernel$ can be interpreted as the upsampling rate.

In general, the set $\dsignalset$ cannot be entirely known from the observation set $\mathcal Y$ as it only contains information about the lower-frequency content of the latent images $z \in \csignalset$, and none for the frequencies in $\xi_{\varkernel} < \xi \leq \xi_\varphi$. However, if $\csignalset$ is known to be invariant to a group action $T_g : \mathcal S \mapsto \mathcal S$, with $g \in G$ then the set of observed images is equal to the augmented set of observed images \cite{tachella_sensing_2023}
\begin{equation}
	\mathcal Y_{G} = \{ \varkernel * T_g z,\ z \in \csignalset, \, g \in G \}.
\end{equation}
In this case, depending on the specific group action, $\mathcal Y_{\mathcal G}$ might have the entirety of the information of $\csignalset$ up to the higher frequency $\xi_\varphi > \xi_\varkernel$. This amounts to saying that the set $\dsignalset$ can be recovered from the set $\mathcal Y_{G}$. The three most common invariances to group actions of natural image distributions are rotations, translations and scaling transforms which are defined as follows:

\begin{definition}[Scaling transforms, translations and rotations]
	Let $z \in \mathcal S$. We define the action by the groups of scaling transforms $\Sigma$, translations $\tau$ and rotations $R$ by
	\begin{align}
		(\mathop{\Sigma_s} z)(\varpos) &= z(s^{-1}\varpos),\ \varpos \in \mathbb R^2, s \in (0, \infty),  \\
		(\mathop{\tau_\varposalt} z)(\varpos) &= z(\varpos - \varposalt),\ \varpos \in \mathbb R^2, \varposalt \in \mathbb R^2, \\
		(\mathop{R_\theta} z)(\varpos) &= z(P_{\theta}^{-1} \ \varpos), \varpos \in \mathbb R^2, \theta \in [0, 2\pi),
	\end{align}
	where $P_{\theta} \in \mathbb R^{2 \times 2}$ is the rotation matrix for angle $\theta$.
\end{definition}

\begin{figure}[!t]
	\includegraphics[width=\columnwidth]{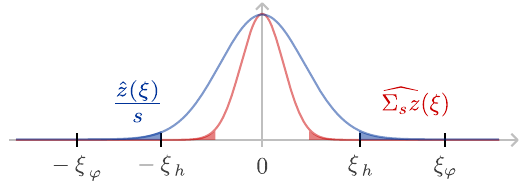}
	\caption{\textbf{Spectral effect of rescaling.} In bandlimited problems such as image super-resolution and deblurring, the image frequencies are only observed on a low-frequency band $[0, \xi_h)$ determined by the bandlimiting operator (\eg{} blur and anti-aliasing filters), and the goal is to recover texture information in a higher frequency band $(\xi_h, \xi_\varphi)$ where $\xi_\varphi / \xi_h$ can be interpreted as the increase in effective resolution. Scale-invariance makes this possible as the spectral information is available at all scales: an image $z(\varpos)$ and the rescaled image $\Sigma_s z(\varpos)$ with scaling factor $s$ have the same spectral content except in different frequency bands.}
	\label{fig:Scaling_Transforms}
\end{figure}

The choice of the group $G$ plays a fundamental role in the recovery of the information lost in the measurement process. While most previous works have focused on rotations and translations~\cite{chen_equivariant_2021,chen_robust_2022,tachella_sensing_2023}, here we show that these are not sufficient for learning from bandlimited measurements alone:

\begin{restatable}{theorem}{theoremSpecialEuclidean} \label{theorem:special_euclidean}
	Let $\varkernel, \varphi \in \mathcal S$. If $\varkernel$ and $\varphi$ are bandlimited with bandwidths $\xi_\varkernel < \xi_\varphi$, then there are sets $\mathcal Z_1, \mathcal Z_2 \subseteq \mathcal S$ such that
	\begin{itemize}
		\item $\mathcal Z_1$ and $\mathcal Z_2$ are roto-translation invariant, that is
			\begin{equation}
				\tau_\varposalt R_\theta z \in \mathcal Z_\ell,\ z \in \mathcal Z_\ell, \varposalt \in \mathbb R^2, \theta \in [0, 2\pi), \ell \in \{ 1, 2 \},
			\end{equation}
		\item for $\mathcal Y_1 = h * \mathcal Z_1$ and $\mathcal Y_2 = h * \mathcal Z_2$ as in \cref{eq:def:Y},
			\begin{equation}
				\mathcal Y_1 = \mathcal Y_2
			\end{equation}
		\item for $\mathcal X_1 = \varphi * \mathcal Z_1$ and $\mathcal X_2 = \varphi * \mathcal Z_2$ as in \cref{eq:def:X},
			\begin{equation}
				\mathcal X_1 \neq \mathcal X_2.
			\end{equation}
	\end{itemize}
\end{restatable}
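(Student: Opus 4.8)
The plan is to build an explicit counterexample in the Fourier domain, exploiting the fact that the measurement filter $h$ annihilates every frequency with $\lVert \xi \rVert \geq \xi_h$, whereas $\varphi$ retains a nonempty band of frequencies above $\xi_h$. The entire difficulty can be packed into a single well-chosen Schwartz function $w$ whose spectrum lives in the annulus $\{\xi_h < \lVert \xi \rVert < \xi_\varphi\}$; I would then take $\mathcal Z_1 = \{0\}$ and let $\mathcal Z_2$ be the roto-translation orbit of $w$.

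First I would establish that there exists a frequency $\xi_0$ with $\lVert \xi_0 \rVert > \xi_h$ and $\hat\varphi(\xi_0) \neq 0$. This follows from the bandwidth hypothesis: if $\hat\varphi$ vanished on all of $\{\lVert \xi \rVert > \xi_h\}$, then by continuity of $\hat\varphi$ (a consequence of $\varphi \in \mathcal S$) it would vanish on $\{\lVert \xi \rVert \geq \xi_h\}$, forcing the bandwidth of $\varphi$ to be at most $\xi_h < \xi_\varphi$, a contradiction. Since $\hat\varphi$ is continuous and nonzero at $\xi_0$, it stays nonzero on a small ball $B(\xi_0, \rho)$, which I can take to lie entirely inside $\{\xi_h < \lVert \xi \rVert < \xi_\varphi\}$. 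I would then define $w$ as the inverse Fourier transform of a smooth bump supported in $B(\xi_0,\rho)$; such a $\hat w$ is smooth and compactly supported, so $w \in \mathcal S$ and $w \neq 0$. By construction the spectral supports of $h$ and $w$ are disjoint, giving $\widehat{h * w} = \hat h\, \hat w = 0$, hence $h * w = 0$; conversely $\widehat{\varphi * w} = \hat\varphi\, \hat w$ is nonzero on $B(\xi_0,\rho)$, hence $\varphi * w \neq 0$.

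Next I would verify the three claims. Both sets are roto-translation invariant: $\{0\}$ trivially, and the orbit $\mathcal Z_2 = \{\tau_v R_\theta w : v \in \mathbb R^2, \theta \in [0,2\pi)\}$ because the roto-translations form a group, so $\tau_{v'} R_{\theta'}(\tau_v R_\theta w)$ is again of the form $\tau_{v''} R_{\theta''} w$ via the composition rule $R_\theta \tau_v = \tau_{P_\theta v} R_\theta$. For the measurement sets, the key observation is that the annulus containing $\operatorname{supp}\hat w$ is rotation invariant, so $\operatorname{supp}\widehat{R_\theta w} = P_\theta\, \operatorname{supp}\hat w$ still avoids $\{\lVert \xi \rVert \geq \xi_h\}$ for every $\theta$; therefore $h * R_\theta w = 0$ for all $\theta$, and since convolution commutes with translation, $h * \tau_v R_\theta w = \tau_v(h * R_\theta w) = 0$. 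Thus $h * \mathcal Z_1 = \{0\} = h * \mathcal Z_2$. Finally $\varphi * \mathcal Z_1 = \{0\}$, whereas $\varphi * \mathcal Z_2$ contains $\varphi * w \neq 0$ (taking $v = 0$, $\theta = 0$), so $\mathcal X_1 \neq \mathcal X_2$.

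The main obstacle — and the reason the statement is not entirely trivial — is reconciling the demand that $\mathcal Z_2$ be a \emph{full} roto-translation orbit with the requirement that the measurement set be left unchanged: perturbing a single image by a high-frequency $w$ is easy, but once I close under rotations I must ensure every rotated copy $R_\theta w$ remains invisible to $h$. This is precisely why I place $\operatorname{supp}\hat w$ in a rotation-invariant annulus rather than in an arbitrary region, which lets the spectral-disjointness argument survive the orbit. Choosing $\mathcal Z_1 = \{0\}$ then sidesteps the dual difficulty of proving that two \emph{nontrivial} orbits have distinct $\varphi$-filtered images, which would otherwise require ruling out accidental coincidences between orbit elements.
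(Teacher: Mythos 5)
Your proof is correct, but it takes a genuinely different route from the paper's. The paper chooses the two sets maximally: $\mathcal Z_1$ is the set of all images bandlimited to $\xi_\varkernel$ and $\mathcal Z_2 = \mathcal S$ is everything; equality of the measurement sets then follows because convolution with $\varkernel$ destroys all content above $\xi_\varkernel$ anyway, and $\mathcal X_1 \neq \mathcal X_2$ is witnessed by $\varphi * \varphi \in \mathcal X_2$, whose bandwidth is exactly $\xi_\varphi > \xi_\varkernel$ while every element of $\mathcal X_1$ has bandwidth at most $\xi_\varkernel$. You instead choose the sets minimally: $\mathcal Z_1 = \{0\}$ and $\mathcal Z_2$ the roto-translation orbit of a single spectral bump $w$ placed in the rotation-invariant annulus $\{\xi_\varkernel < \lVert\xi\rVert < \xi_\varphi\}$ at a point where $\hat\varphi$ does not vanish. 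Your version makes the equality $\mathcal Y_1 = \mathcal Y_2 = \{0\}$ completely trivial, whereas the paper's identification of $\varkernel * \mathcal S$ with the set of all bandlimited functions vanishing where $\hat\varkernel$ does implicitly requires a spectral division step (producing a Schwartz preimage) that it glosses over; the price you pay is the extra, correctly argued, lemma that $\hat\varphi$ is nonzero somewhere strictly above $\xi_\varkernel$, which the paper avoids by exploiting the exactness of the bandwidth of $\varphi * \varphi$. Both arguments rest on the same underlying mechanism --- roto-translations preserve the radial location of spectral support, so content in the band $(\xi_\varkernel,\xi_\varphi)$ can be hidden from $\varkernel$ while remaining visible to $\varphi$ --- and your explicit appeal to the rotation invariance of the annulus, together with the commutation rule $R_\theta \tau_\varposalt = \tau_{P_\theta \varposalt} R_\theta$ to close the orbit, makes that mechanism especially transparent.
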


For the sake of brevity, we only exhibit one example of such sets $\mathcal Z_1$ and $\mathcal Z_2$ in the proof (in the appendix) but it is possible to find a matching $\mathcal Z_2$ for any candidate $\mathcal Z_1$ by altering it in the frequency band $(\xi_\varkernel, \xi_\varphi)$.

Intuitively the action by rotations and translations does not increase the bandwidth of an image and thus does not provide information in the frequency range $\xi_\varkernel < \xi < \xi_\varphi$. On the contrary, the action by scaling modifies the bandwidth of an image (illustrated in \Cref{fig:Figure1,fig:Scaling_Transforms}), which allows us to identify the set $\dsignalset$ from the bandlimited images alone.

\begin{restatable}{theorem}{theoremHomothety} \label{theorem:homothety}
	Let $\varkernel, \varphi \in \mathcal S$ and $\mathcal Z \subseteq \mathcal S$. If $\mathcal Z$ is scale-invariant, that is
	\begin{equation}
		\mathop{\Sigma_s} z \in \mathcal Z, \forall z \in \mathcal Z, s \in (0, \infty),
	\end{equation}
	$\varphi$ is bandlimited, and $\hat \varkernel(0) \neq 0$, where $\hat \varkernel$ is the Fourier transform of $\varkernel$, then the set $\mathcal X$ defined in \cref{eq:def:X} is uniquely determined by the set of observed images $\mathcal Y$ defined in \cref{eq:def:Y}. More precisely, there is a $s \geq \frac{\xi_\varphi}{\xi_h}$ which only depends on $\varphi$ and $\varkernel$, for which $\hat h(s^{-1} \xi) \neq 0$ as long as $\lVert \xi \rVert_2 < \xi_\varphi$, and
	\begin{equation}
		\mathcal X = \left\{ x_y, \ y \in \mathcal Y \right\},
	\end{equation}
	where $x_y \in \mathcal S$ is defined by
	\begin{equation} \label{eq:def:xy}
		\hat x_y(\xi) = \begin{cases}
			\frac{\hat \varphi(\xi) \hat y(s^{-1} \xi)}{s\hat h(s^{-1}\xi)}, & \text{if $\lVert \xi \rVert_2 < \xi_\varphi$}, \\
			0, & \text{otherwise}.
		\end{cases}
	\end{equation}
\end{restatable}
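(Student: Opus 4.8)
The plan is to work entirely on the Fourier side and to exploit the elementary dilation identity
\[ \widehat{\Sigma_s z}(\xi) = s^2\, \hat z(s\xi), \qquad \xi \in \mathbb R^2, \]
which says that a spatial dilation by $s$ compresses the spectrum by $s$ while scaling its amplitude by the Jacobian $s^2$. Scale-invariance of $\mathcal Z$ means that $z$ and $\Sigma_s z$ are interchangeable members of $\mathcal Z$ for every $s>0$, so this identity is the one mechanism that can move spectral content across frequency bands (in contrast with the rotations and translations of \cref{theorem:special_euclidean}). I would prove the set equality $\mathcal X = \{x_y : y \in \mathcal Y\}$ by two inclusions, in each case substituting $\hat y = \hat h\, \hat z$ into the reconstruction formula and recognising the result as the transform of a low-pass filtered rescaling of some $z \in \mathcal Z$.

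First I would fix the scale $s$. Two requirements pin it down, and both concern the denominator $\hat h(s^{-1}\xi)$ in \cref{eq:def:xy}. To keep the argument $s^{-1}\xi$ inside the passband of $h$ for every $\lVert\xi\rVert_2 < \xi_\varphi$ I need $\xi_\varphi/s \le \xi_h$, that is, $s \ge \xi_\varphi/\xi_h$. To guarantee that $\hat h(s^{-1}\xi)$ is not merely inside the band but actually nonzero, I would invoke $\hat h(0)\neq 0$ together with the continuity of $\hat h$ (automatic since $h \in \mathcal S$): there is a radius $\delta \in (0,\xi_h]$ with $\hat h(\eta)\neq0$ whenever $\lVert\eta\rVert_2<\delta$, and taking $s$ large enough that $\xi_\varphi/s < \delta$ forces $\hat h(s^{-1}\xi)\neq 0$ for all $\lVert\xi\rVert_2<\xi_\varphi$. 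Since $\delta$ and $\xi_h$ depend only on $h$ and $\xi_\varphi$ only on $\varphi$, this $s$ depends only on $h$ and $\varphi$ and is uniform over the whole set $\mathcal Y$, as claimed. With this choice $x_y$ is well defined, and $\hat x_y$ is smooth and compactly supported (it vanishes for $\lVert\xi\rVert_2\ge\xi_\varphi$), so $x_y\in\mathcal S$.

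For the inclusion $\{x_y : y \in \mathcal Y\}\subseteq\mathcal X$, I would take $y=h*z$ with $z\in\mathcal Z$, so that $\hat y(s^{-1}\xi)=\hat h(s^{-1}\xi)\hat z(s^{-1}\xi)$; the factor $\hat h(s^{-1}\xi)$ cancels against the denominator, and with the multiplicative normalisation chosen to absorb the dilation Jacobian ($s^2$ in two dimensions) the dilation identity gives $\hat x_y(\xi)=\hat\varphi(\xi)\,\widehat{\Sigma_{1/s}z}(\xi)$ for $\lVert\xi\rVert_2<\xi_\varphi$, with both sides vanishing elsewhere because $\hat\varphi$ does. Hence $x_y=\varphi*\Sigma_{1/s}z$, and $\Sigma_{1/s}z\in\mathcal Z$ by scale-invariance, so $x_y\in\mathcal X$. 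For the reverse inclusion $\mathcal X\subseteq\{x_y : y \in \mathcal Y\}$, given $\varphi*z_0\in\mathcal X$ I would run the computation backwards: set $z=\Sigma_s z_0$, which lies in $\mathcal Z$ by scale-invariance, let $y=h*z\in\mathcal Y$, and verify that the formula returns $\hat x_y=\hat\varphi\,\hat{z}_0$, i.e.\ $x_y=\varphi*z_0$.

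The main obstacle is the choice of $s$: the naive requirement $s\ge\xi_\varphi/\xi_h$ only ensures that the rescaled band sits inside the nominal support of $\hat h$, where $\hat h$ may still have interior zeros that would make the division in \cref{eq:def:xy} ill-posed. Reconciling band coverage with non-vanishing of $\hat h$ is exactly where the hypothesis $\hat h(0)\neq0$ is indispensable, since it lets me enlarge $s$ so that the relevant ball $\lVert s^{-1}\xi\rVert_2<\xi_\varphi/s$ shrinks into the neighbourhood of the origin on which $\hat h$ is guaranteed nonzero. The remaining bookkeeping — tracking the dilation Jacobian through the reconstruction constant and confirming membership in $\mathcal S$ — is routine by comparison.
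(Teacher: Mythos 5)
Your proposal follows essentially the same route as the paper's proof: you fix the scale $s$ by combining $\hat h(0)\neq 0$ with the continuity of $\hat h$ to obtain a zero-free ball around the origin and then choosing $s$ large enough to dilate the band $\lVert\xi\rVert_2<\xi_\varphi$ into it, and you establish the set equality by the same two inclusions, substituting $\hat y = \hat h\,\hat z$ via the convolution theorem and recognising $x_y$ as $\varphi * \Sigma_{1/s}z$ through the Fourier dilation identity. The only divergence is bookkeeping: you carry the correct two-dimensional Jacobian $s^2$, whereas the paper's normalisation in \cref{eq:def:xy} uses a single factor of $s$ (a one-dimensional convention), which is a discrepancy in the paper's constant rather than a flaw in your argument.
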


See the appendix for the detailed proofs. These results show that leveraging scale-invariance is crucial for learning from bandlimited measurements alone, and in the next section we present a practical self-supervised loss that exploits this property.

\section{Proposed method}
\label{section.method}

In this section, we assume that there is an approximately one-to-one correspondence between the latent images and the measured images which guarantees that image recovery is possible, and we introduce our proposed method for deblurring and super-resolution that exploits the scale-invariance of typical image distributions.

As shown by \Cref{theorem:homothety}, as long as the set of latent images is invariant to scale, there is enough information in the set of bandlimited images to uniquely determine the set of images with an arbitrarily high bandwidth. This hints that self-supervised learning can be as effective as supervised learning for solving bandlimiting imaging problems like image super-resolution and deblurring.

In the problems of deblurring and super-resolution, the goal is to recover a high-resolution image $x \in \mathbb R^n$ from a low-resolution image $y \in \mathbb R^m$. For notational simplicity, we represent images as vectors where every dimension corresponds to a different pixel but the method is designed to be used to learn a nonlinear reconstruction filter that takes in images of arbitrary size, \eg{} a convolutional neural network (CNN) or a suitable vision transformer (ViT). The relation between $x$ and $y$ is modelled as
\begin{equation}
	y = (\varkernel * x) {\downarrow_r} + \varepsilon
\end{equation}
where $\downarrow_r$ is the subsampling operator by factor $r$, and $\varepsilon \sim \mathcal N(0, \sigma^2 I)$ is additive Gaussian noise\footnote{We focus on Gaussian noise in this work and use the corresponding risk estimator -- it can be substituted for other noise distributions.}. In super-resolution problems $\varkernel$ is the anti-aliasing filter and $r > 1$ whereas in deblurring problems $\varkernel$ is the blur kernel and $r = 1$. Self-supervised learning-based methods attempt to solve this problem by learning a reconstruction function\footnote{For the problem to be well-defined, it is necessary that there is one-to-one-ness between the large-bandwidth images and the small-bandwidth images.} $f_\theta : y \mapsto x$ parametrized by a neural network with learnable weights $\theta \in \mathbb R^p$ using a dataset of blurry/low-resolution images only\,$\{y_i\}_{i=1}^{N}$.

\paragraph{Stein's unbiased risk estimator}

The SURE loss~\cite{metzler_unsupervised_2020,chen_robust_2022} is a self-supervised loss that serves as an unbiased estimator of the mean squared error (MSE) in measurement space
\begin{equation}
	\mathbb E_{x,y} \lVert \left(\varkernel * f_\theta(y)\right){\downarrow_r} - (\varkernel * x){\downarrow_r} \rVert_2^2 ,
\end{equation}
which does not require access to ground truth images $x$. The SURE loss for measurements corrupted by Gaussian noise of standard deviation $\sigma$ \cite{chen_robust_2022} is given by:
\begin{equation} \label{eq.sure}
	\begin{split}
		\mathcal L_{\text{SURE}} (\theta) = & \sum_{i=1}^N  \frac 1 m \lVert \left(\varkernel * f_\theta(y_i)\right){\downarrow_r}  - y_i \rVert_2^2 \\ &- \sigma^2
  + \frac{2\sigma^2}{m} \text{div}[\left(\varkernel * f_\theta\right){\downarrow_r}] (y_i).
	\end{split}
\end{equation}
where the divergence $\text{div}$ is approximated using a Monte-Carlo estimate~\cite{ramani2008monte}.
It is worth noting that, in the noiseless setting, i.e., $\sigma = 0$, the SURE loss is equal to a measurement consistency constraint $\left(\varkernel * f_\theta(y)\right){\downarrow_r}=y$,
\begin{equation}
	\mathcal L_{\text{MC}} (\theta) =  \sum_{i=1}^N \frac 1 m \lVert \left(\varkernel * f_\theta(y_i)\right){\downarrow_r} - y_i \rVert_2^2.
\end{equation}
which does not penalize the reconstructions in the nullspace of $A$. In the setting of super-resolution and image deblurring, the nullspace of the operator is associated with the high-frequencies of the image $x$, and thus training with the SURE loss alone fails to reconstruct the high-frequency content.

We propose to minimize the training loss
\begin{equation} \label{eq.sei}
	\mathcal L_{\text{SEI}}(\theta) = \mathcal L_{\text{SURE}} (\theta) + \alpha \mathcal L_{\text{SEQ}} (\theta),
\end{equation}
where $\alpha > 0$ is a trade-off parameter, $\mathcal L_{\text{SURE}} (\theta)$ is given in \Cref{eq.sure}, and $\mathcal L_{\text{SEQ}}(\theta)$ is the equivariant loss introduced below, which is inspired by the robust equivariant imaging loss \cite{chen_robust_2022}. In all experiments, we set $\alpha = 1$.
The two contributions of our method are the use of the action by downscaling transforms instead of the action by shifts or by rotations, and stopping the gradient during the forward pass, as illustrated in~\Cref{fig:Figure1} (details are explained below).

In the forward pass, we first compute $x_{i,\theta}^{(1)} = f_\theta(y_i)$, which we view as an estimation of the $i$th ground truth image $x_i$. Next, we compute $x_i^{(2)} = \mathop{\Sigma_s} x_{i,\theta}^{(1)}$, where $\Sigma_s$ is the downscaling transform by factor $s \in (0, 1)$, without propagating the gradient, which effectively leaves $\nabla_\theta x_i^{(2)} = 0$. Finally, we compute $x_{i,\theta}^{(3)} = f_\theta(Ax_i^{(2)} + \tilde \varepsilon)$ where $\tilde \varepsilon$ is sampled from $\mathcal N(0, \sigma^2I)$. The three terms are used to compute $\mathcal L_{\text{SURE}}$ and $\mathcal L_{\text{SEQ}}$, where $\mathcal L_{\text{SEQ}}$ is defined as
\begin{equation}
	\mathcal L_{\text{SEQ}} (\theta) = \sum_{i=1}^N \frac 1 n \lVert x_{i,\theta}^{(3)} - x_i^{(2)} \rVert_2^2.
\end{equation}

\paragraph{Downscaling transformations} In all experiments, $\Sigma_s$ consists in resampling the input image viewed as periodic on a (coarse) grid with a spacing of $s^{-1}$ pixels, using bicubic interpolation. The grid center is chosen randomly at each call and the scale factor is chosen uniformly from $\{ .5, .75 \}$. \Cref{fig:Downscaling_Transforms} shows two possible downscalings of an image patch.

\paragraph{Gradient stopping}
The gradient-stopping step, which removes the dependence of $x_i^{(2)}$ on the network weights $\theta$ plays an important role in the self-supervised loss. \Cref{tab:Ablation_Study} shows that gradient stopping significantly improves the performance of the self-supervised network for both image deblurring and image super-resolution, i.e., by about 1dB. Such gradient-stopping steps have proved crucial in various self-supervised representation learning methods and lead to significant performance increase \cite{grill_bootstrap_nodate,chen_exploring_2020}.

\begin{table*}
	\caption{Deblurring and super-resolution performance. (Bold) best performing method, (blue) second best.
	}
	\label{tab:Main}

	\centering
	\setlength{\tabcolsep}{2pt} 
	\begin{tabular}{lcccccccccccccccccc}
		\toprule
& \multicolumn{3}{c}{Gaussian filter (small)} & \multicolumn{3}{c}{Gaussian filter (medium)} & \multicolumn{3}{c}{Gaussian filter (large)} & \multicolumn{3}{c}{Box filter (small)} & \multicolumn{3}{c}{Box filter (medium)} & \multicolumn{3}{c}{Box filter (large)} \\
	\cmidrule(rl){2-4} \cmidrule(rl){5-7} \cmidrule(rl){8-10} \cmidrule(rl){11-13} \cmidrule(rl){14-16} \cmidrule(rl){17-19}
		Method & PSNR & SSIM & LPIPS & PSNR & SSIM & LPIPS & PSNR & SSIM & LPIPS & PSNR & SSIM & LPIPS & PSNR & SSIM & LPIPS & PSNR & SSIM & LPIPS \\
		\midrule
		Supervised & \pmb{$30.7$} & \pmb{$0.923$} & \textcolor{blue}{$0.064$} & \pmb{$25.9$} & \pmb{$0.803$} & \pmb{$0.249$} & \pmb{$23.7$} & \pmb{$0.694$} & \pmb{$0.361$} & \pmb{$29.1$} & \pmb{$0.879$} & \pmb{$0.107$} & \pmb{$27.3$} & \pmb{$0.825$} & \pmb{$0.171$} & \pmb{$25.4$} & \pmb{$0.755$} & \pmb{$0.255$} \\
		SEI (Ours) & \textcolor{blue}{$30.3$} & \textcolor{blue}{$0.917$} & \pmb{$0.062$} & \textcolor{blue}{$25.9$} & \textcolor{blue}{$0.795$} & \textcolor{blue}{$0.254$} & \textcolor{blue}{$23.7$} & \textcolor{blue}{$0.687$} & \textcolor{blue}{$0.379$} & \textcolor{blue}{$28.8$} & \textcolor{blue}{$0.874$} & \textcolor{blue}{$0.112$} & \textcolor{blue}{$27.0$} & \textcolor{blue}{$0.818$} & \textcolor{blue}{$0.178$} & \textcolor{blue}{$25.0$} & \textcolor{blue}{$0.739$} & \textcolor{blue}{$0.309$} \\
		CSS & $29.1$ & $0.899$ & $0.202$ & $24.2$ & $0.707$ & $0.462$ & $22.8$ & $0.622$ & $0.549$ & $26.0$ & $0.804$ & $0.282$ & $24.2$ & $0.713$ & $0.418$ & $23.1$ & $0.640$ & $0.508$ \\
		SURE & $26.3$ & $0.731$ & $0.257$ & $23.2$ & $0.594$ & $0.447$ & $21.1$ & $0.490$ & $0.538$ & $25.8$ & $0.754$ & $0.234$ & $24.5$ & $0.675$ & $0.329$ & $23.3$ & $0.595$ & $0.437$ \\
		BM3D & $29.5$ & $0.881$ & $0.184$ & $25.4$ & $0.757$ & $0.380$ & $23.5$ & $0.649$ & $0.506$ & $27.1$ & $0.809$ & $0.270$ & $25.5$ & $0.743$ & $0.356$ & $24.6$ & $0.700$ & $0.420$ \\
		DIP & $27.3$ & $0.825$ & $0.177$ & $24.4$ & $0.691$ & $0.350$ & $22.8$ & $0.579$ & $0.452$ & $24.1$ & $0.675$ & $0.378$ & $23.1$ & $0.598$ & $0.433$ & $22.3$ & $0.538$ & $0.481$ \\
		EI & $28.0$ & $0.835$ & $0.247$ & $24.8$ & $0.718$ & $0.360$ & $22.7$ & $0.594$ & $0.445$ & $24.5$ & $0.709$ & $0.344$ & $23.2$ & $0.646$ & $0.385$ & $22.9$ & $0.611$ & $0.463$ \\
\midrule
		Blurry images & $26.4$ & $0.794$ & $0.269$ & $22.8$ & $0.597$ & $0.461$ & $21.2$ & $0.494$ & $0.604$ & $23.4$ & $0.629$ & $0.359$ & $21.9$ & $0.528$ & $0.464$ & $21.0$ & $0.465$ & $0.552$ \\
\bottomrule
	\end{tabular}

	\vspace{.2cm}

	\setlength{\tabcolsep}{11.3pt} 
	\begin{tabular}{lccccccccc}
	\toprule
	& \multicolumn{3}{c}{Downsampling ($\times 2$)} & \multicolumn{3}{c}{Downsampling ($\times 3$)} & \multicolumn{3}{c}{Downsampling ($\times 4$)} \\
	\cmidrule(rl){2-4} \cmidrule(rl){5-7} \cmidrule(rl){8-10}
	Method & PSNR $\uparrow$ & SSIM $\uparrow$ & LPIPS $\downarrow$ & PSNR $\uparrow$ & SSIM $\uparrow$ & LPIPS $\downarrow$ & PSNR $\uparrow$ & SSIM $\uparrow$ & LPIPS $\downarrow$ \\
	\midrule
	Supervised & \pmb{$28.99$} & \pmb{$0.8821$} & \pmb{$0.1416$} & \pmb{$25.94$} & \pmb{$0.7833$} & \pmb{$0.2491$} & \pmb{$24.25$} & \pmb{$0.6983$} & \pmb{$0.3185$} \\
	SEI (Ours) & $28.09$ & $0.8500$ & $0.1972$ & \textcolor{blue}{$25.29$} & $0.7436$ & $0.3833$ & \textcolor{blue}{$23.73$} & \textcolor{blue}{$0.6568$} & $0.4642$ \\
	CSS & \textcolor{blue}{$28.50$} & \textcolor{blue}{$0.8616$} & \textcolor{blue}{$0.1781$} & $25.29$ & \textcolor{blue}{$0.7441$} & $0.3835$ & $21.38$ & $0.6126$ & $0.5085$ \\
	DIP & $25.77$ & $0.7673$ & $0.2678$ & $24.11$ & $0.6828$ & \textcolor{blue}{$0.3649$} & $22.79$ & $0.5915$ & \textcolor{blue}{$0.4447$} \\
	EI & $27.76$ & $0.8413$ & $0.2469$ & $25.13$ & $0.7387$ & $0.4108$ & $23.51$ & $0.6441$ & $0.5132$ \\
	\midrule
	Bicubic & $27.23$ & $0.8306$ & $0.2986$ & $24.67$ & $0.7185$ & $0.4660$ & $23.26$ & $0.6325$ & $0.5882$ \\
	\bottomrule
	\end{tabular}
\end{table*}

\begin{figure}[!t]
	\includegraphics[width=\columnwidth]{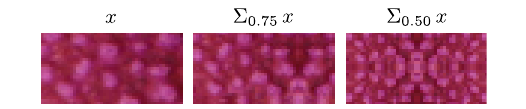}
	\caption{\textbf{Implementation of the (down-)scaling transformations.} Scale-equivariant imaging (SEI) uses rescaled versions of network outputs as learning targets implemented as bicubic resampling on a coarser grid randomly sampled for each batch element. Using a coarser grid makes the resulting image close to the target manifold $\mathcal X$ as its spectral information corresponds to the low-frequency content of the (unknown) target image thanks to the loss $\mathcal L_{\text{SURE}}$ and to the scale-invariance of the latent image set $\mathcal Z$.}
	\label{fig:Downscaling_Transforms}
\end{figure}

\paragraph{Comparison to self-supervision via patch recurrence}
A related self-supervised approach to ours is the zero-shot super-resolution proposed by~\cite{shocher_zero-shot_2018}, which uses the measurements $y_i$ as ground truth targets, and computes corresponding corrupted inputs applying the forward operator to $y$, i.e.,
\begin{equation}
    \tilde{y} = \left(\varkernel * y\right){\downarrow_r}  + \tilde \varepsilon
\end{equation}
where $\tilde \varepsilon$ is sampled from a zero-mean Gaussian distribution with the same standard deviation. Then, a self-supervised loss is constructed as
\begin{equation} \label{eq: CSS}
	\mathcal L_{\text{CSS}}(\theta) =  \sum_{i=1}^N \frac 1 m  \lVert y_i - f_\theta(\tilde{y}_i) \rVert^2_2
\end{equation}
which we name Classic Self-Supervised (CSS) loss.
 This approach is only valid when measurements $y$ can be interpreted as images, such as in a super-resolution problem \cite{shocher_zero-shot_2018}, however, it is unclear whether it is effective for solving general problems, e.g., compressed sensing.

The CSS method requires that the forward operator is a downsampling transformation and that the noise $\varepsilon$ is sufficiently small, such that
$\left(\varkernel * y\right){\downarrow_r} \approx x'$ for some $x'$ which also belongs to the dataset. On the contrary, as the proposed method exploits scale-invariance on the reconstructed images, it can be used with any operator $A$. Thus, for the super-resolution problems with a small amount of noise, we expect to obtain a similar performance with both methods, whereas in the more challenging image deblurring problem we expect to observe a different performance across methods.

\begin{table}
	\caption{Ablation study. GS: Gradient stopping.}
	\label{tab:Ablation_Study}

	\centering
	\setlength{\tabcolsep}{2.5pt} 
\begin{tabular}{cccccccccc}
\toprule
& \multicolumn{3}{c}{Gaussian filter (medium)} & \multicolumn{3}{c}{Box filter (medium)} & \multicolumn{3}{c}{Downsampling ($\times 2$)} \\
\cmidrule(rl){2-4} \cmidrule(rl){5-7} \cmidrule(rl){8-10}
GS & PSNR & SSIM & LPIPS & PSNR & SSIM & LPIPS & PSNR & SSIM & LPIPS \\
\midrule

\checkmark & \pmb{$25.9$} & \pmb{$0.795$} & \pmb{$0.254$} & \pmb{$27.0$} & \pmb{$0.818$} & \pmb{$0.178$} & \pmb{$28.1$} & \pmb{$0.850$} & \pmb{$0.197$} \\
\texttimes & $22.9$ & $0.632$ & $0.506$ & $22.6$ & $0.601$ & $0.551$ & $26.8$ & $0.810$ & $0.305$ \\
\bottomrule
\end{tabular}
\end{table}

\section{Experiments}
\label{section.experiments}

We evaluate our method for non-blind deblurring and super-resolution on images
corrupted by 1) Gaussian filters, 2) box filters, and 3) bicubic downsampling
operators. This allows us to verify that our method performs well across
different imaging modalities. We use SwinIR~\cite{liang_swinir_2021} as the
architecture of our reconstruction network -- one of the best-performing
architectures for image super-resolution. In each
setting, we provide a comparison with a wide variety of state-of-the-art
methods, including the CSS method in~\Cref{eq:
CSS}~\cite{shocher_zero-shot_2018}, the diffusion method DiffPIR
\cite{zhu2023denoising}, (robust) equivariant imaging (EI)
\cite{chen_robust_2022}, GSURE
\cite{eldar_generalized_2009,metzler_unsupervised_2020}, the Plug-and-Play (PnP)
method DPIR~\cite{zhang2017learning}, BM3D \cite{dabov2007image} and Deep Image
Prior (DIP) \cite{ulyanov_deep_2018, darestani2021accelerated}. For a fair
comparison and to demonstrate the importance of the training loss in our
method, we also compare against models trained the same way as ours
except for the training loss. In addition to the self-supervised baselines, we
compare with supervised baselines and show that our scale-equivariant imaging
approach (SEI) compares favorably against them, even though it is fully
self-supervised and does not have direct access to the underlying distribution
of clean images. Our results show that our method performs better than other
state-of-the-art self-supervised methods and significantly improves the quality
of corrupted images.

\begin{figure*}[!t]
	\includegraphics[width=\textwidth]{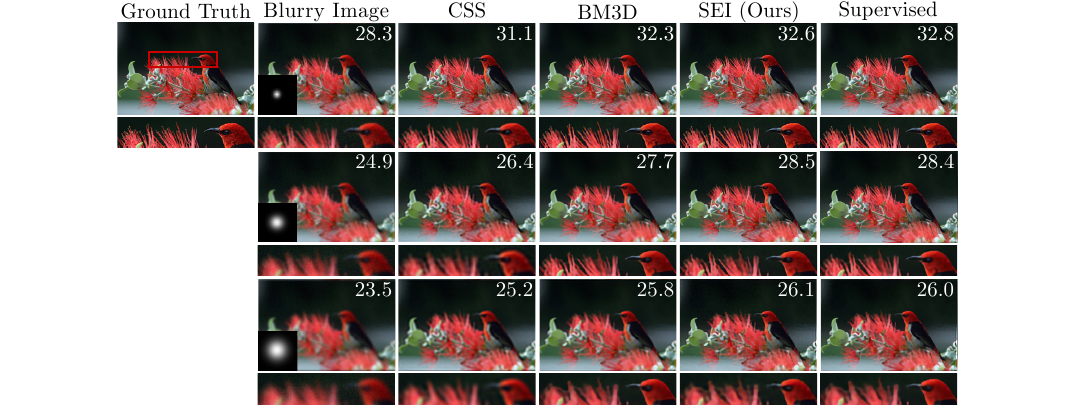}
	\caption{\textbf{Visual comparison for Gaussian deblurring.} Top-right corner: PSNR, bottom-left corner: point-spread functions.}
	\label{fig:Gaussian_Deblurring}
\end{figure*}

\begin{figure}[!t]
	\includegraphics[width=\columnwidth]{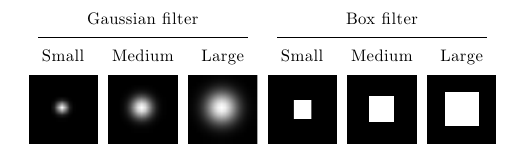}
	\caption{\textbf{Point-spread functions.} Our method makes no assumption about the shape or strength of the blur kernel making it broadly applicable and in particular we verify that it performs well on Gaussian filters with standard deviation $1$, $2$ or $3$ pixels, and for box filters with radius $2$, $3$ or $4$ pixels.}
	\label{fig:Blur_Kernels}
\end{figure}

For every degradation model, we constitute pairs of clean and corrupted images
that we use for training and evaluation. We evaluate all methods, using two
datasets of natural images DIV2K \cite{agustsson_ntire_2017} and Urban100
\cite{huang_single_2015}, and a dataset of computed tomography (CT) scans
\cite{armato2011lung}. \Cref{fig:Blur_Kernels} shows the 6
different blur kernels used for synthesizing some of the corrupted images, and
the remaining ones are synthesized using bicubic downsampling with multiple
sampling rates (2 and 4). The resulting images are further corrupted by additive
white Gaussian noise making the degradation model closer to real-world image
degradations happening in most imaging systems. The standard deviation for the
Gaussian noise is set to $\sigma = 5/255$. The 900 image pairs from DIV2K are
split into 100 testing pairs and 800 training pairs, the 100 image pairs from
Urban100 are split into 10 testing pairs and 90 training pairs, and the 5092
pairs for CT scans are split into 100 testing pairs and 4992 training pairs.
\Cref{tab:Main,tab:Additional_Datasets} show that
scale-invariant imaging tends to perform as well as the supervised baseline and
significantly better than the other state-of-the-art self-supervised methods,
both across different imaging modalities and different image distributions.
Examples of deblurred and upsampled images are shown in
\Cref{fig:Gaussian_Deblurring}.

\begin{figure}[!t]
	\includegraphics[width=\columnwidth]{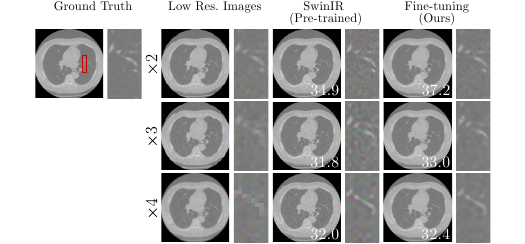}
	\caption{\textbf{Fine-tuning on CT scans.} Self-supervised
	methods for super-resolution do not require ground truth images and can
	be used to fine-tune pre-trained models on test data directly. Here, we
	fine-tune models trained on natural images to upsample CT scans corrupted by
	additional Gaussian noise.}
	\label{fig:Fine_Tuning}
\end{figure}

\paragraph{Fine-tuning}
Because it is fully self-supervised and unlike supervised methods, it is
possible to use scale-equivariant imaging directly on test data.
This enables fine-tuning models pre-trained on image distributions that might
differ from the test data, generally resulting in better performing models. We
evaluate this approach by fine-tuning 3 models on CT scans. The 3 models are
pre-trained on pairs high and low-resolution images from DIV2K (natural
images), each for a different sampling rate (2, 3, and 4), all with the same
supervised training algorithm. The low-resolution images are synthesized using
bicubic downsampling similarly to the low-resolution CT scans that we are using
except for the absence of additional noise. \Cref{tab:Fine_Tuning} shows that
the performance of the fine-tuned model is significantly greater than the
performance of the pre-trained model, ranging from a $.5$~dB to a $2$~dB
difference in PSNR. Example of upsampled images are shown in
\Cref{fig:Fine_Tuning}.

\paragraph{Ablation studies}
In addition to the experiments comparing our method with other methods, we also conduct a range of ablation studies to explore the role of the gradient stopping, the sensitivity of key parameters, and the influence of the chosen network architecture and interpolation strategy. Here, we will focus on the results for the gradient stopping and relegate the additional studies to~\Cref{section.additional_ablation_studies}.
For each of the three degradation models, we train two models using our proposed method: one with and one without gradient stopping. \Cref{tab:Ablation_Study} shows that gradient stopping improves the reconstructions by more than $2$~dB (PSNR) in all three cases. This is consistent with our broader empirical observation that gradient stopping consistently improves the performance of the proposed method.

\begin{figure*}[!t]
	\includegraphics[width=\textwidth]{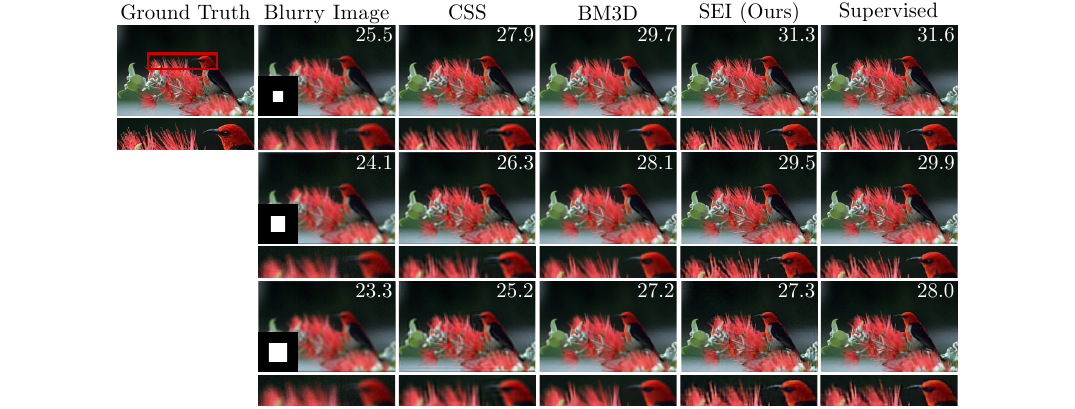}
	\caption{\textbf{Visual comparison for box deblurring.} Bottom-left corner: point-spread function, top-right corner: PSNR.}
	\label{fig:Box_Deblurring}
\end{figure*}

\begin{table}
	\caption{Fine-tuning performance. FT: Fine-tuning.}
	\label{tab:Fine_Tuning}

	\centering
	\setlength{\tabcolsep}{3pt} 
\begin{tabular}{lccccccccc}
\toprule
& \multicolumn{3}{c}{Downsampling ($\times 2$)} & \multicolumn{3}{c}{Downsampling ($\times 3$)} & \multicolumn{3}{c}{Downsampling ($\times 4$)} \\
\cmidrule(rl){2-4} \cmidrule(rl){5-7} \cmidrule(rl){8-10}
FT & PSNR & SSIM & LPIPS & PSNR & SSIM & LPIPS & PSNR & SSIM & LPIPS \\
\midrule
$\checkmark$ & \pmb{$37.1$} & \pmb{$0.882$} & \pmb{$0.123$} & \pmb{$33.4$} & \pmb{$0.870$} & \pmb{$0.138$} & \pmb{$32.9$} & \pmb{$0.835$} & \pmb{$0.194$} \\
\texttimes & $34.9$ & $0.804$ & $0.260$ & $32.0$ & $0.799$ & $0.370$ & $32.4$ & $0.816$ & $0.398$ \\
\bottomrule
\end{tabular}
\end{table}

\subsection{Experimental details}

\paragraph{Performance metrics}
Similarly to \cite{liang_swinir_2021}, we convert reference images and image estimates from RGB to YCbCr, and we report the PSNR and the SSIM for the luminance (Y) channel. For LPIPS, the values are directly obtained from the RGB images as usual.

\paragraph{Training hyperparameters}
We train all of the models using a batch size of $8$, for $500$ epochs for DIV2K, for $4000$ epochs for Urban100, and for $100$ epochs for the CT scans. For all trainings except fine-tuning ones, we use Adam~\cite{kingma_adam_2014} with an initial learning rate of $0.0005$ for deblurring, and of $0.0002$ for super-resolution, with $\beta = (0.9, 0.999)$ in both cases. For fine-tuning, we use regular stochastic gradient descent (SGD) instead, with an initial learning rate of $0.01$. This leads to more stable training dynamics and better resulting performance. Each model takes from 10 to 20 hours to train on a single recent GPU using SIDUS~\cite{quemener_sidus_2013}.

\paragraph{Baseline hyperparameters}
For the supervised baseline (Supervised), we use the $\ell^2$-distance as the training loss. For deep image prior (DIP), we use a learning rate of $0.005$ and random inputs of size $16 \times 16$ with $32$ channels. We use the network architecture proposed by \cite{darestani2021accelerated}, for at most $100$ iterations and with early-stopping. For plug-and-play (PnP), we use the method proposed by \cite{zhang2017learning}. The denoiser is pre-trained on images from various datasets \cite{ma2016waterloo,timofte2017ntire}, including DIV2K.

\begin{table}
\caption{Deblurring performance on additional datasets.}
\label{tab:Additional_Datasets}
\centering
\setlength{\tabcolsep}{3pt} 
\begin{tabular}{lcccccccc}
\toprule
& \multicolumn{3}{c}{Urban 100} & \multicolumn{3}{c}{CT scans} \\
 \cmidrule(rl){2-4} \cmidrule(rl){5-7}
Method & PSNR $\uparrow$ & SSIM $\uparrow$ & LPIPS $\downarrow$ & PSNR $\uparrow$ & SSIM $\uparrow$ & LPIPS $\downarrow$ \\
\midrule
Supervised & \textcolor{blue}{$23.34$} & \textcolor{blue}{$0.7271$} & \textcolor{blue}{$0.1726$} & \pmb{$38.10$} & \pmb{$0.9502$} & \pmb{$0.0459$} \\
SEI (Ours) & \pmb{$23.63$} & \pmb{$0.7380$} & \pmb{$0.1689$} & \textcolor{blue}{$34.05$} & $0.8828$ & \textcolor{blue}{$0.0908$} \\
CSS & $21.10$ & $0.5827$ & $0.4017$ & $30.13$ & $0.8747$ & $0.2493$ \\
BM3D & $23.16$ & $0.6894$ & $0.3170$ & $33.81$ & \textcolor{blue}{$0.9216$} & $0.1850$ \\
DIP & $20.34$ & $0.4913$ & $0.4450$ & $30.19$ & $0.7739$ & $0.2922$ \\
\midrule
Blurry images & $18.99$ & $0.3786$ & $0.4468$ & $27.79$ & $0.7147$ & $0.4250$ \\
\bottomrule
\end{tabular}
\end{table}

\section{Conclusion}
\label{section.conclusion}

In this work, we introduce scale-equivariant imaging (SEI), a new self-supervised method for image deblurring and super-resolution which recovers the missing high-frequency content in images corrupted by a blur filter or by a downsampling operator directly from the corrupted images and without using a single clean image. Throughout extensive experiments, we show that SEI can be as effective as fully-supervised methods and that it outperforms other state-of-the-art self-supervised methods, including diffusion and plug-and-play approaches. This is consistent both across different degradation models and across different image distributions, including natural and medical images.

Unlike previous works which focus on rotations and translations~\cite{chen_equivariant_2021, chen_robust_2022}, we show that the right symmetry to exploit for bandlimited imaging problems is the invariance to scaling transforms as it enables the recovery of high-frequency information lost in the imaging process. On the contrary, exploiting rotations and translations cannot help to recover any information beyond the bandwidth of the corrupted images. Moreover, we present a novel approach to model identification theory~\cite{tachella_sensing_2023}, which, instead of recovering the original set of latent images, focuses on the recovery of a projection of this set.

The promising results of SEI suggest that it could be applied to the significantly more difficult blind imaging problems where the degradation model is only partially known. We expect that exploiting the symmetries of the degradation model will prove crucial to solve these problems.


\bibliography{manuscript}
\bibliographystyle{IEEEtran}

\appendix

\subsection{Additional ablation studies} \label{section.additional_ablation_studies}

In this section, we present additional ablations made on the three different imaging modalities considered in the main paper. We study the influence of six additional hyperparameters on the final performance and report the results in~\Cref{tab:additional_ablation_studies}.

\textbf{1) Trade-off parameter.} Different values of $\alpha$ control the relative importance of the two terms in~\cref{eq.sei}. We compare the value $\alpha = 1$ used in the main experiments with $\alpha = 0.1$ and $\alpha = 10$ and observe that $\alpha = 1$ performs the best. It might indicate that the two terms are properly balanced without the need for the addition of a weighting factor, and motivates further theoretical analyses.

\textbf{2) Curriculum learning.}  We test if gradually increasing the trade-off parameter $\alpha$ throughout training is beneficial compared to using a fixed value. We increase $\alpha$ linearly from $0$ to $1$ every iteration. Our results show a performance drop compared to using a fixed value which might indicate that high values of $\alpha$ are needed early in the training to guide the network.

\textbf{3) Scale distribution.} We compare different scale distributions for computing the loss. We choose uniform scales on $\{ 1/2, 3/4 \}$ as in the main experiments, on $\{ 1/4, 7/8 \}$ to test scales closer to $0$ and $1$, and on the continuous intervals $[1/2, 3/4]$ and $[1/4, 7/8]$ to test the importance of intermediate scales. Our results show no clear benefit of using scales closer to $0$ and $1$ or intermediate scales. It might indicate that the network learns to reconstruct higher and higher frequencies as the training progresses rendering the use of a broader range of scales less impactful.

\textbf{4) Number of transformations.} In the main experiments, we transform each batch element exactly once per iteration. In this experiment, we explore augmenting each batch with $2$ or $4$ transformed versions of each batch element to see if it improves the performance at the expense of a higher memory footprint. Our results show no clear benefit of doing so which might indicate that the gradient steps are similar for the different scales.

\textbf{5) Interpolation scheme.} In the main experiments, we use bicubic interpolation to implement the downscaling transformations. Here, we also consider bilinear and nearest neighbor interpolation. Our results are mixed across super-resolution and deblurring. For deblurring, bicubic interpolation seems to perform better followed by bilinear interpolation, whereas for super-resolution nearest neighbor interpolation performs the best followed by bicubic interpolation. This might indicate a possible drawback of the blurring effect of bicubic and bilinear interpolation against certain bandlimiting filters.

\textbf{6) Network architecture.} In addition to SwinIR, we consider UNet~\cite{ronneberger2015u}, DRUnet~\cite{zhang2021plug} and SCUnet~\cite{zhang2023practical} backbones using the implementations from DeepInverse~\cite{tachella2025deepinverse}. We observe that while SwinIR and DRUNet perform similarly, UNet and SCUNet performs worse which might indicate that these architectures require a larger amount of training data.

\begin{table*}[th]
\centering
\setlength{\tabcolsep}{2pt} 
\renewcommand{\arraystretch}{1} 
\scriptsize
\caption{Additional ablation studies.}
\label{tab:additional_ablation_studies}
\begin{adjustwidth}{-10mm}{-10mm}
\resizebox{\linewidth}{!}{
  \begin{tabular}{>{\centering\arraybackslash}p{14mm}ccc ccc ccc}
\toprule
& \multicolumn{3}{c}{Gaussian filter (medium)} & \multicolumn{3}{c}{Box filter (medium)} & \multicolumn{3}{c}{Downsampling ($\times 2$)} \\
\cmidrule(lr){2-4} \cmidrule(lr){5-7} \cmidrule(lr){8-10}
$\alpha$ & PSNR $\uparrow$ & SSIM $\uparrow$ & LPIPS $\downarrow$ & PSNR $\uparrow$ & SSIM $\uparrow$ & LPIPS $\downarrow$ & PSNR $\uparrow$ & SSIM $\uparrow$ & LPIPS $\downarrow$ \\
\midrule
0.1 & $24.72 \pm 2.41$ & $0.6941 \pm 0.0542$ & $0.3983 \pm 0.0717$ & $25.82 \pm 2.64$ & $0.7604 \pm 0.0595$ & $0.2846 \pm 0.0789$ & \textcolor{blue}{$27.66 \pm 2.66$} & $0.8352 \pm 0.0357$ & \textcolor{blue}{$0.2245 \pm 0.0572$} \\
1 & \pmb{$25.86 \pm 2.75$} & \pmb{$0.7945 \pm 0.0813$} & \pmb{$0.2540 \pm 0.1074$} & \pmb{$27.04 \pm 2.95$} & \pmb{$0.8182 \pm 0.0744$} & \pmb{$0.1771 \pm 0.0832$} & \pmb{$28.09 \pm 3.03$} & \pmb{$0.8616 \pm 0.0568$} & \pmb{$0.1908 \pm 0.0765$} \\
10 & \textcolor{blue}{$25.57 \pm 2.70$} & \textcolor{blue}{$0.7748 \pm 0.0735$} & \textcolor{blue}{$0.3048 \pm 0.0921$} & \textcolor{blue}{$26.89 \pm 2.80$} & \textcolor{blue}{$0.8049 \pm 0.0659$} & \textcolor{blue}{$0.1945 \pm 0.0756$} & $27.21 \pm 2.63$ & \textcolor{blue}{$0.8364 \pm 0.0345$} & $0.3070 \pm 0.0678$ \\
\midrule
Curr.~lear. & PSNR $\uparrow$ & SSIM $\uparrow$ & LPIPS $\downarrow$ & PSNR $\uparrow$ & SSIM $\uparrow$ & LPIPS $\downarrow$ & PSNR $\uparrow$ & SSIM $\uparrow$ & LPIPS $\downarrow$ \\
\midrule
\texttimes & \pmb{$25.86 \pm 2.75$} & \pmb{$0.7945 \pm 0.0813$} & \pmb{$0.2540 \pm 0.1074$} & \pmb{$27.04 \pm 2.95$} & \pmb{$0.8182 \pm 0.0744$} & \pmb{$0.1771 \pm 0.0832$} & \pmb{$28.09 \pm 3.03$} & \pmb{$0.8616 \pm 0.0568$} & \pmb{$0.1908 \pm 0.0765$} \\
\checkmark &
$24.96 \pm 2.54$ & $0.7267 \pm 0.0611$ & $0.3622 \pm 0.0721$
& $26.46 \pm 2.75$ & $0.7892 \pm 0.0650$ & $0.2243 \pm 0.0766$
& $28.04 \pm 2.68$ & $0.8354 \pm 0.0412$ & $0.2416 \pm 0.0659$ \\
\midrule
Scale~dist. & PSNR $\uparrow$ & SSIM $\uparrow$ & LPIPS $\downarrow$ & PSNR $\uparrow$ & SSIM $\uparrow$ & LPIPS $\downarrow$ & PSNR $\uparrow$ & SSIM $\uparrow$ & LPIPS $\downarrow$ \\
\midrule
$\{1/2, 3/4\}$ & \pmb{$25.86 \pm 2.75$} & \pmb{$0.7945 \pm 0.0813$} & \pmb{$0.2540 \pm 0.1074$} & \pmb{$27.04 \pm 2.95$} & \pmb{$0.8182 \pm 0.0744$} & \pmb{$0.1771 \pm 0.0832$} & $28.09 \pm 3.03$ & \pmb{$0.8616 \pm 0.0568$} & \pmb{$0.1908 \pm 0.0765$} \\
$[1/2, 3/4]$ & $25.11 \pm 2.61$ & $0.7439 \pm 0.0664$ & \textcolor{blue}{$0.3129 \pm 0.0729$} & \textcolor{blue}{$26.70 \pm 2.78$} & \textcolor{blue}{$0.7982 \pm 0.0663$} & $0.2102 \pm 0.0784$ & \pmb{$28.30 \pm 2.70$} & $0.8466 \pm 0.0347$ & $0.2087 \pm 0.0640$ \\
$\{1/4, 7/8\}$ & \textcolor{blue}{$25.52 \pm 2.70$} & \textcolor{blue}{$0.7651 \pm 0.0696$} & $0.3481 \pm 0.0906$ & $26.49 \pm 2.74$ & $0.7897 \pm 0.0637$ & $0.2177 \pm 0.0737$ & \textcolor{blue}{$28.27 \pm 2.68$} & $0.8423 \pm 0.0385$ & $0.2137 \pm 0.0643$ \\
$[1/4, 7/8]$ & $25.39 \pm 2.67$ & $0.7534 \pm 0.0700$ & $0.3419 \pm 0.0836$ & $26.64 \pm 2.76$ & $0.7957 \pm 0.0651$ & \textcolor{blue}{$0.2080 \pm 0.0752$} & $28.27 \pm 2.67$ & \textcolor{blue}{$0.8480 \pm 0.0357$} & \textcolor{blue}{$0.1937 \pm 0.0617$} \\
\midrule
Nb~trans. & PSNR $\uparrow$ & SSIM $\uparrow$ & LPIPS $\downarrow$ & PSNR $\uparrow$ & SSIM $\uparrow$ & LPIPS $\downarrow$ & PSNR $\uparrow$ & SSIM $\uparrow$ & LPIPS $\downarrow$ \\
\midrule
1 & \pmb{$25.86 \pm 2.75$} & \pmb{$0.7945 \pm 0.0813$} & \pmb{$0.2540 \pm 0.1074$} & \pmb{$27.04 \pm 2.95$} & \pmb{$0.8182 \pm 0.0744$} & \pmb{$0.1771 \pm 0.0832$} & $28.09 \pm 3.03$ & \pmb{$0.8616 \pm 0.0568$} & \pmb{$0.1908 \pm 0.0765$} \\
2 & \textcolor{blue}{$25.43 \pm 2.66$} & \textcolor{blue}{$0.7614 \pm 0.0704$} & \textcolor{blue}{$0.3301 \pm 0.0855$} & $26.77 \pm 2.79$ & $0.7997 \pm 0.0661$ & $0.2062 \pm 0.0767$ & \textcolor{blue}{$28.38 \pm 2.75$} & $0.8491 \pm 0.0373$ & $0.2060 \pm 0.0650$ \\
4 & $25.34 \pm 2.63$ & $0.7555 \pm 0.0672$ & $0.3415 \pm 0.0828$ & \textcolor{blue}{$26.82 \pm 2.79$} & \textcolor{blue}{$0.8020 \pm 0.0662$} & \textcolor{blue}{$0.2052 \pm 0.0775$} & \pmb{$28.44 \pm 2.75$} & \textcolor{blue}{$0.8510 \pm 0.0363$} & \textcolor{blue}{$0.1959 \pm 0.0616$} \\
\midrule
Interp. & PSNR $\uparrow$ & SSIM $\uparrow$ & LPIPS $\downarrow$ & PSNR $\uparrow$ & SSIM $\uparrow$ & LPIPS $\downarrow$ & PSNR $\uparrow$ & SSIM $\uparrow$ & LPIPS $\downarrow$ \\
\midrule
Bicubic & \pmb{$25.86 \pm 2.75$} & \pmb{$0.7945 \pm 0.0813$} & \pmb{$0.2540 \pm 0.1074$} & \pmb{$27.04 \pm 2.95$} & \pmb{$0.8182 \pm 0.0744$} & \pmb{$0.1771 \pm 0.0832$} & $28.09 \pm 3.03$ & \pmb{$0.8616 \pm 0.0568$} & \pmb{$0.1908 \pm 0.0765$} \\
Bilinear & \textcolor{blue}{$25.50 \pm 2.71$} & \textcolor{blue}{$0.7642 \pm 0.0719$} & \textcolor{blue}{$0.3280 \pm 0.0893$} & \textcolor{blue}{$26.76 \pm 2.81$} & \textcolor{blue}{$0.8006 \pm 0.0671$} & $0.2191 \pm 0.0803$ & \pmb{$28.42 \pm 2.77$} & \textcolor{blue}{$0.8500 \pm 0.0400$} & $0.2070 \pm 0.0635$ \\
Nearest & $25.10 \pm 2.62$ & $0.7392 \pm 0.0648$ & $0.3806 \pm 0.0820$ & $26.60 \pm 2.77$ & $0.7941 \pm 0.0661$ & \textcolor{blue}{$0.2184 \pm 0.0782$} & \textcolor{blue}{$28.40 \pm 2.75$} & $0.8485 \pm 0.0398$ & \textcolor{blue}{$0.2015 \pm 0.0633$} \\
\midrule
Arch. & PSNR $\uparrow$ & SSIM $\uparrow$ & LPIPS $\downarrow$ & PSNR $\uparrow$ & SSIM $\uparrow$ & LPIPS $\downarrow$ & PSNR $\uparrow$ & SSIM $\uparrow$ & LPIPS $\downarrow$ \\
\midrule
SwinIR & \pmb{$25.86 \pm 2.75$} & \pmb{$0.7945 \pm 0.0813$} & \pmb{$0.2540 \pm 0.1074$} & \pmb{$27.04 \pm 2.95$} & \pmb{$0.8182 \pm 0.0744$} & \pmb{$0.1771 \pm 0.0832$} & $28.09 \pm 3.03$ & \pmb{$0.8616 \pm 0.0568$} & \pmb{$0.1908 \pm 0.0765$} \\
UNet & $24.56 \pm 2.49$ & $0.6690 \pm 0.0557$ & $0.3862 \pm 0.0652$ & $26.16 \pm 2.76$ & $0.7735 \pm 0.0695$ & \textcolor{blue}{$0.2217 \pm 0.0715$} & \textcolor{blue}{$28.20 \pm 2.70$} & $0.8443 \pm 0.0360$ & $0.2224 \pm 0.0641$ \\
DRUNet & \textcolor{blue}{$25.65 \pm 2.70$} & \textcolor{blue}{$0.7405 \pm 0.0647$} & \textcolor{blue}{$0.3856 \pm 0.0820$} & \textcolor{blue}{$26.28 \pm 2.72$} & \textcolor{blue}{$0.7760 \pm 0.0653$} & $0.2477 \pm 0.0756$ & \pmb{$28.48 \pm 2.80$} & \textcolor{blue}{$0.8581 \pm 0.0372$} & \textcolor{blue}{$0.1964 \pm 0.0602$} \\
SCUNet & $22.77 \pm 1.95$ & $0.6607 \pm 0.0590$ & $0.4240 \pm 0.0647$ & $24.09 \pm 2.03$ & $0.7370 \pm 0.0516$ & $0.3207 \pm 0.0808$ & $27.51 \pm 2.39$ & $0.8393 \pm 0.0360$ & $0.2143 \pm 0.0625$ \\
\bottomrule
\end{tabular}
}
\end{adjustwidth}
\end{table*}

\subsection{Proofs}

We prove the two theorems introduced in \Cref{section.theory} and define the Fourier transform and the convolution operation.

\begin{definition}[Fourier transform, inverse Fourier transform]
	Let $z \in \mathcal S$, we consider the Fourier transform of $z$ to be the function $\hat z \in \mathcal S$ defined by
	\begin{equation}
		\hat z(\xi) = \int_{\mathbb R^2} z(\varpos) e^{-i2\pi \xi^\top \varpos} \mathrm d\varpos,\ \xi \in \mathbb R^2.
	\end{equation}
	Conversely, for $\hat z \in \mathcal S$, the inverse Fourier transform of $\hat z$ is the function $z \in \mathcal S$ defined by
	\begin{equation}
		z(\varpos) = \int_{\mathbb R^2} \hat z(\xi) e^{i2\pi \xi^\top \varpos} \mathrm d\xi,\ \varpos \in \mathbb R^2.
	\end{equation}
\end{definition}

\begin{definition}[Convolution]
	Let $\varkernel, z \in \mathcal S$, we consider the convolution of $\varkernel$ and $z$ to be the function $\varkernel * z \in \mathcal S$ defined by
	\begin{equation}
		(\varkernel * z)(\varpos) = \int_{\mathbb R^2} \varkernel(\varposalt) z(\varpos - \varposalt) \mathrm d\varposalt,\ \varpos \in \mathbb R^2.
	\end{equation}
\end{definition}

\theoremSpecialEuclidean*

\begin{proof}
Let's assume that $\varkernel$ and $\varphi$ are bandlimited with $\xi_\varkernel < \xi_\varphi$. Let
\begin{equation}
	\mathcal Z_1 = \{ z \in \mathcal S, \hat z(\xi) = 0, \forall \lVert \xi \rVert_2 \geq \xi_\varkernel \}
\end{equation}
be the set of images with bandwidth at most $\xi_\varkernel$ and $\mathcal Z_2 = \mathcal S$ the set of all images (bandlimited or not). We show that 1) $\mathcal Z_1$ and $\mathcal Z_2$ are invariant under roto-translations, 2) that $\mathcal Y_1 = \mathcal Y_2$, and 3) that $\mathcal X_1 \neq \mathcal X_2$. 1) The subset $\mathcal Z_2$ is the whole image space $\mathcal S$ which is invariant under roto-translations, and since the bandwidth of an image does not change under roto-translations, the subset $\mathcal Z_1$ is also invariant under roto-translations. 2) The subsets $\mathcal Z_1$ and $\mathcal Z_2$ contain every image with bandwidth at most $\xi_\varkernel$ so $\mathcal Y_1$ and $\mathcal Y_2$ are both equal to the set of all images with bandwidth at most $\xi_\varkernel$ which also vanish when the filter vanishes \begin{equation}
	\mathcal Y_1 = \mathcal Y_2 = \{ y \in \mathcal S, \hat \varkernel(\xi) = 0 \implies \hat y(\xi) = 0,\ \forall \xi \in \mathbb R^2 \}.
\end{equation}
3) Consider the image $x = \varphi * \varphi$, it has bandwidth $\xi_\varphi$ and it is of the form $\varphi * z$ for $z \in \mathcal S$, so $x \in \mathcal X_2$ but $x \notin \mathcal X_1$ as $\mathcal X_1$ contains only images with bandwidth at most $\xi_\varkernel$ and $\xi_\varkernel < \xi_\varphi$.
\end{proof}

\theoremHomothety*

\begin{proof}
	Let $\xi \in \mathbb R^2$. Since $\hat \varkernel(0) \neq 0$, and since $\hat \varkernel$ is continuous as it is in $\mathcal S$, there is a $\xi_h > R_\varkernel > 0$ such that
	\begin{equation}
		\hat \varkernel(\xi) \neq 0, \forall \xi \in \mathbb R^2, \lVert \xi \rVert_2 \leq R_\varkernel.
	\end{equation}
	Let $s = \frac{\xi_\varphi}{R_\varkernel}$ and
	\begin{equation}
		\hat {\mathcal X} = \{ x_y, y \in \mathcal Y \}.
	\end{equation}
	Notice that $x_y$ as defined in \cref{eq:def:xy} is only well-defined if $\hat \varkernel(s^{-1}\xi) \neq 0$ for all $\lVert \xi \rVert_2 < \xi_\varphi$, or equivalently if $\hat \varkernel(\xi) \neq 0$ for all $\lVert \xi \rVert < s^{-1} \xi_{\varphi} = R_\varkernel$, which is the case for that specific value of $s$. We are going to show that $\hat {\mathcal X} = \mathcal X$ where
	\begin{equation}
		\mathcal X = \{ \varphi * z,\ z \in \mathcal Z \}.
	\end{equation}
	First, we show that $\hat {\mathcal X} \subseteq \mathcal X$.
	For $y \in \mathcal Y$, there is a $z \in \mathcal Z$ such that
	\begin{equation}
		y = \varkernel * z,
	\end{equation}
	and according to the convolution theorem,
	\begin{equation} \label{eq:yhz:conv}
		\hat y(s^{-1} \xi) = \varkernel(s^{-1} \xi)\hat z(s^{-1} \xi).
	\end{equation}
	The $\hat x_y$ associated with this $y$ is obtained by replacing \cref{eq:yhz:conv} in \cref{eq:def:xy}
	\begin{equation} \label{eq:xy2}
		\hat x_y(\xi) = \begin{cases}
			s^ {-1} \hat \varphi(\xi)\hat z(s^ {-1}\xi), & \text{if $\lVert \xi \rVert_2 < \xi_\varphi$}, \\
			0, & \text{otherwise}.
		\end{cases}
	\end{equation}
	Now, let $z_s = \Sigma_{s^{-1}} z$, according to the scaling formula for the Fourier transform,
	\begin{equation}
		\hat z_s(\xi) = s^{-1} \hat z(s^{-1} \xi).
	\end{equation}
	Since $z \in \mathcal Z$ and since $\mathcal Z$ is invariant to scale, $z_s \in \mathcal Z$ and $x \in \mathcal X$ for
	\begin{equation}
		x = \varphi * z_s.
	\end{equation}
	Moreover, according to the convolution theorem,
	\begin{equation}
		\hat x(\xi) = \hat \varphi(\xi) \hat z_s(\xi),
	\end{equation}
	and more precisely since $\varphi$ is bandlimited
	\begin{equation} \label{eq:xsphiz}
		\hat x(\xi) = \begin{cases}
			\hat \varphi(\xi) \hat z_s(\xi), & \text{if $\lVert \xi \rVert_2 < \xi_\varphi$}, \\
			0, & \text{otherwise}.
		\end{cases}
	\end{equation}
	Comparing \cref{eq:xy2} with \cref{eq:xsphiz} we have
	\begin{equation}
		\hat x_y(\xi) = \hat x(\xi),
	\end{equation}
	thus $x_y = x$, and since $x \in \mathcal X$, $\hat {\mathcal X} \subseteq \mathcal X$. \\
	Conversely, we show that $\mathcal X \subseteq \hat {\mathcal X}$. Let $x \in \mathcal X$, then there is a $z \in \mathcal Z$ such that
	\begin{equation}
		x = \varphi * z.
	\end{equation}
	Now, let
	\begin{equation}
		y_s = \varkernel * (\mathop{\Sigma_s} z).
	\end{equation}
	According to the convolution theorem,
	\begin{equation}
		\hat x(\xi) = \hat \varphi(\xi) \hat z(\xi),
	\end{equation}
	and
	\begin{align}
		\hat y_s(\xi) = \hat \varkernel(\xi) \hat z_s(\xi),
	\end{align}
	where $z_s = \mathop{\Sigma_s} z$, and according to the scaling formula for the Fourier transform,
	\begin{equation}
		\hat z_s(\xi) = s\hat z(s\xi) = s (\mathop{\Sigma_{s^{-1}}} \hat z)(\xi),
	\end{equation}
	as
	\begin{equation}
		z_s(\varpos) = z(s^{-1}\varpos), \forall \varpos \in \mathbb R^2.
	\end{equation}
	If $\lVert \xi \rVert_2 < \xi_\varphi$, then it follows directly from the previous equations that
	\begin{align}
		&\hat x(\xi) = \hat \varphi(\xi) \hat z(\xi), \\
		&= \hat \varkernel(s^{-1}\xi)^{-1} \hat \varphi(\xi) \hat \varkernel(s^{-1}\xi)\hat z(\xi), \\
		&= \hat \varkernel(s^{-1}\xi)^{-1} \hat \varphi(\xi) \Big( \hat \varkernel(s^{-1}\xi)\hat z(ss^{-1}\xi)\Big), \\
		&= \hat \varkernel(s^{-1}\xi)^{-1} \hat \varphi(\xi) \Big( \hat \varkernel(s^{-1}\xi)(\mathop{\Sigma_{s^{-1}}} \hat z)(s^{-1}\xi)\Big), \\
		&= \hat \varkernel(s^{-1}\xi)^{-1} \hat \varphi(\xi) \Big( \hat \varkernel(s^{-1}\xi)(s^{-1}s\mathop{\Sigma_{s^{-1}}} \hat z)(s^{-1}\xi)\Big), \\
		&= s^{-1} \hat \varkernel(s^{-1}\xi)^{-1} \hat \varphi(\xi) \Big( \hat \varkernel(s^{-1}\xi)\hat z_s(s^{-1}\xi)\Big), \\
		&= s^{-1} \hat \varkernel(s^{-1}\xi)^{-1} \hat \varphi(\xi) \hat y_s(s^{-1}\xi).
	\end{align}
	Hence, as $\varphi$ is bandlimited,
	\begin{equation}
		\hat x(\xi) = \begin{cases}
			\frac{\hat \varphi(\xi) \hat y_s(s^{-1}\xi)}{s \hat \varkernel(s^{-1}\xi)} &\text{if $\lVert \xi \rVert_2 < \xi_\varphi$,} \\
			0 &\text{otherwise,}
		\end{cases}
	\end{equation}
	thus $x = x_y$ and since $y \in \mathcal Y$, $x \in \hat {\mathcal X}$, thus $\mathcal X \subseteq \hat {\mathcal X}$.
\end{proof}

\end{document}